\newtheorem{theorem}{Theorem}
\newtheorem{proof}{Proof}
\newcommand{\bra}[1]{{\left\langle{#1}\right\vert}}
\newcommand{\ket}[1]{{\left\vert{#1}\right\rangle}}
\newcommand{\leftpar}{\boldsymbol{ \left(\right. }}
\newcommand{\rightpar}{\boldsymbol{ \left.\right) } }
\providecommand{\myfloor}[1]{\left \lfloor #1 \right \rfloor }
\begin{document}

\title{Fredkin Spin Chain}
\author{Olof Salberger and Vladimir Korepin}
\maketitle

\begin{abstract}
We introduce a new model of interacting spin 1/2. It describes interaction of three nearest neighbors. The Hamiltonian can be expressed in terms of Fredkin gates. The Fredkin gate (also known as the CSWAP gate) is a computational circuit suitable for reversible computing. Our construction  generalizes the work   ~\cite{movassagh2015power}.
Our model can be solved by means of Catalan combinatorics in the form of random walks on upper half of a square lattice [Dyck walks]. Each Dyck path can be mapped on a wave function of spins.   The ground state is an equally weighted superposition of Dyck walks [instead of Motzkin walks]. We can also express it as a matrix product state.   We further construct the model  of interacting spins 3/2 and greater half-integer spins. The models with higher spins require coloring of Dyck walks. We construct SU(k) symmetric model [here k is the number of colors]. The leading term of the entanglement entropy is then proportional to the square root of the length of the lattice [like in Shor-Movassagh model]. The gap closes as a high power  of the length of the lattice.
\end{abstract}

\newpage

\section{Introduction}

Entanglement is a feature  separating classical and quantum worlds. It was discovered 80 years ago, but it  becomes important now because of building quantum devices.
Entanglement is used in quantum communication and computation. 
Entropy of a subsystem is a good measure of entanglement ~\cite{bennett1996concentrating}, it is called entanglement entropy. It was studied in models of interacting spins. Quantum fluctuations in spin models enhance with lowering of space dimension. Spin chains provide good models for the 
experimental realization of quantum devices such as quantum computers ~\cite{schauss2015crystallization} ~\cite{hild2014far} ~\cite{nascimbene2012experimental} ~\cite{porras2004effective}. Entanglement entropy in spin chains was studied intensely.  Recently Peter Shor and Ramis Movassagh constructed an spin chain describing local  interaction of integer spins  ~\cite{movassagh2015power}, ~\cite{bravyi2012criticality}. It display abnormally high levels of entanglement.  We generalize their results to half-integer spins.

\section{The Hamiltonian and ground state}

Our bulk Hamiltonian is given by an interaction between singlet pairs with neighbouring sites. Specifically, the Hamiltonian can be written as 

\begin{equation}
H = H_{bulk} + H_{\partial} = H_{\partial} + \sum_j H_j  
\end{equation}

where we have the boundary term $H_{\partial} = \ket{\downarrow_1}\bra{\downarrow_1} +\ket{\uparrow_N}\bra{\uparrow_N}$, and where the bulk term is given by the following three-site interaction between singlet pairs and neighbouring sites:

\begin{equation}
H_j = \ket{\uparrow_j}\bra{\uparrow_j} \otimes \ket{S_{j+1,j+2}}  \bra{S_{j+1,j+2}} +  \ket{S_{j,j+1}} \bra{S_{j,j+1}} \otimes \ket{\downarrow_{j+2}}\bra{\downarrow_{j+2} }  
\end{equation}

where $\ket{S_{i,j}}$ is the familiar singlet state $\frac{1}{\sqrt{2}}(\ket{\uparrow_i}\ket{\downarrow_j} - \ket{\downarrow_i}\ket{\uparrow_j})$. In other words, we have an interaction term for a spin up on the left of a singlet, and a term for a spin down on the right of a singlet.\\

Despite its simplicity, this seemingly innocent Hamiltonian turns out to have a highly entangled ground state with a very rich combinatorial structure, as we will see in the following section which introduces the necessary formalism to write down the ground state.\\

Before we move on however, we allow ourselves to rewrite the bulk Hamiltonian in a few alternate ways. The first is in terms of Pauli spin operators, where it can be written as:

\begin{equation}
H_j = (1 + \sigma^Z_j)(1 - \vec{\sigma}_{j+1} \cdot \vec{\sigma}_{j+2}) + (1 - \vec{\sigma}_{j} \cdot \vec{\sigma}_{j+1})(1 - \sigma^Z_{j+2})  
\end{equation}

finally, we have yet another way to write down the bulk terms of the Hamiltonian, by noting their relationship to Fredkin (Controlled-Swap) gates:

\begin{equation}
H_j = (1 - F_{j,j+1,j+2}) + (1 -\sigma^x_{j+2}  F_{j+2,j+1,j} \sigma^x_{j+2})
\end{equation}

Where $F_{i,j,k}$ is the Fredkin gate acting on three qbits, which is both Hermitian and unitary. It swaps j,k if site i is in state $\ket{\uparrow}$ and does nothing if it is in state $\ket{\downarrow}$.

\subsection{Combinatorial prerequisites: Dyck Words}

In this section we consisely describe the of combinatorics necessary to describe the ground state and its relation to the Hamiltonian.

The central idea for defining our model is to think of the basis states of our length N chain of two-state systems as the set of N-symbol words in a two-letter alphabet. Depending on context, the letters would then be called {\bf } $\uparrow$ and $\downarrow$ for the basis states $\ket{\uparrow}$ and $\ket{\downarrow}$, 0 and 1 for $\ket{0}$ and $\ket{1}$, or more suggestively for our purposes, ( and ) for $\ket{ \leftpar }$ and $\ket{ \rightpar }$. \\

Among generic two-letter words, we may single out the set of Dyck words. \textit{These are precisely the set of words consisting of an equal number of left and right parentheses, such that every left parenthesis  has a matching right parenthesis further right along the word.} That is, there are no right or left parentheses that are  unmatched on either side of the word.\\

For example, ()(()) and (()()) are both valid six-letter Dyck words, while )(()() or ())(() are not.\\

Clearly, a word can only be Dyck if N = 2n is even, however since splitting up a word into pieces that may be odd is a useful operation we will still consider words of odd length.\\

The Dyck-ness of a word is thus a global feature which forces symbols at different places in the word to be correlated. For example, if we slice up the word along the middle the number of unmatched left parentheses in the left half must be equal to the number of unmatched right parentheses in the right half. These correlations will be the key source of entanglement in the physical model that we propose below.

\subsubsection{Dyck words as an equivalence class}
\label{sec:eqclass}

Having defined Dyck words and observed that their definition is global, we then observe that they do admit a local description.

To do this, we consider the larger set of arbitrary words and introduce the following moves, which in analogy to Reidemeister moves we call the Fredkin moves:

\begin{eqnarray} \label{FredkinMoves}
()) \leftrightarrow )() \\ (() \leftrightarrow ()(
\end{eqnarray}

where we allow a matched pair of adjacent parentheses to be moved anywhere. The reason why we call these the Fredkin moves is that when viewed as bit strings, the involution that takes one side to the other is a three-bit reversible logic gate known as the Fredkin gate, up to a flipped control bit. We will sometimes also refer to these as Fredkin relations.\\

We now consider how these relations divide up the set of words of length N into equivalence classes. We immediately observe that the relations preserve the matchings of the parentheses, and as such map Dyck words to Dyck words only and non-Dyck words to non-Dyck words only.\\

A full classification of the equivalence classes may be performed fairly straightforwardly by observing that:\\

1) If any given word has at least one pair of adjacent matched parentheses, we may use the Fredkin relation to move it to the front of the word. This allows us to focus on a subword of length N-2\\

2) The only words that have no matched pair of adjacent parentheses are of the form , i.e. consist of a left parentheses followed by b right parentheses.\\

Together, these observations imply that every word of length N is equivalent to exactly one word consisting of 2k matched parentheses, followed by $a$ unmatched right parenthese, followed by b left parentheses. That is, of the form ()()...k times () ()  )))... a times ...)) ((((... b times...(((, where $a + b + 2k = N$. We call this the \textit{standard form} of the word. Since clearly none of these words are equivalent to each other this completes our classification. We may denote these equivalence classes $C_{a,b}(N)$, and observe that $C_{0,0}(N)$, which is non-empty for even N, is exactly the set of Dyck words of length N = 2n.\\

We finish off by stating that the number of words in each equivalence class $C_{a,b}(N)$ is given by:
\begin{equation} \label{PathCount}
|C_{a,b}(N)| = \begin{cases} 
      \binom{N}{\frac{N + a + b}{2}} - \binom{N}{\frac{N + a + b}{2} + 1} & \textrm{ if $N - a - b$ is even and positive} \\
      1 & \textrm{if N = a + b} \\
      0 & \textrm{ otherwise} \\
   \end{cases}
\end{equation}

which in the nonzero cases match the numbers in the Catalan triangle (where each number is the sum of the numbers above it and the one to its left). We note that in the special case where N = 2n and a = b = 0, we recover the famous Catalan numbers $C_N = \frac{1}{n + 1}\binom{2n}{n}$.\\

We may prove this by constructing an explicit bijection between $C_{a,b}$ and $C_{0,a+b}$, which incidentally is also a symmetry of the bulk Hamiltonian. This simply consists of exchanging each unmatched right parenthesis in a member of $C_{a,b}$  with an unmatched left parenthesis. Since the subword between one unmatched parenthesis of a word and the next must be a Dyck word, this must map matched parentheses to matched parentheses and unmatched parentheses to unmatched parentheses, which makes the inverse of the map trivial to find.\\

The number of elements in $C_{0,a+b}$ can then be obtained directly from Bertrand's ballot theorem with ties allowed \cite{wikiBertrand}, or by using the combinatorial definition of the numbers in Catalan's triangle.

\subsubsection{Dyck words reinterpreted as Dyck paths}

In this section, we introduce the path notation to describe Dyck words. This notation is particularly useful in handwritten work and has the advantage of being easier to parse. The path formalism is also generally the most convenient to use for counting the number of elements in a given Fredkin equivalence class. \\

The key observation is that two-letter words can also be interpreted as paths on the upper half plane of a lattice. Specifically, all words in the equivalence class $C_{a,b}(N)$ can be viewed as a path from the lattice point $(0,a)$ to the lattice point $(N,b)$ such that on each step the path can only travel diagonally from $(x,y)$ to $(x+1, y\pm 1)$, and such that it is always stays in the upper half plane $y \geq 0$. The bijection is given simply by associating left parentheses with a step up and right parentheses with a step down.\\

\begin{figure}
\includegraphics[width=\textwidth]{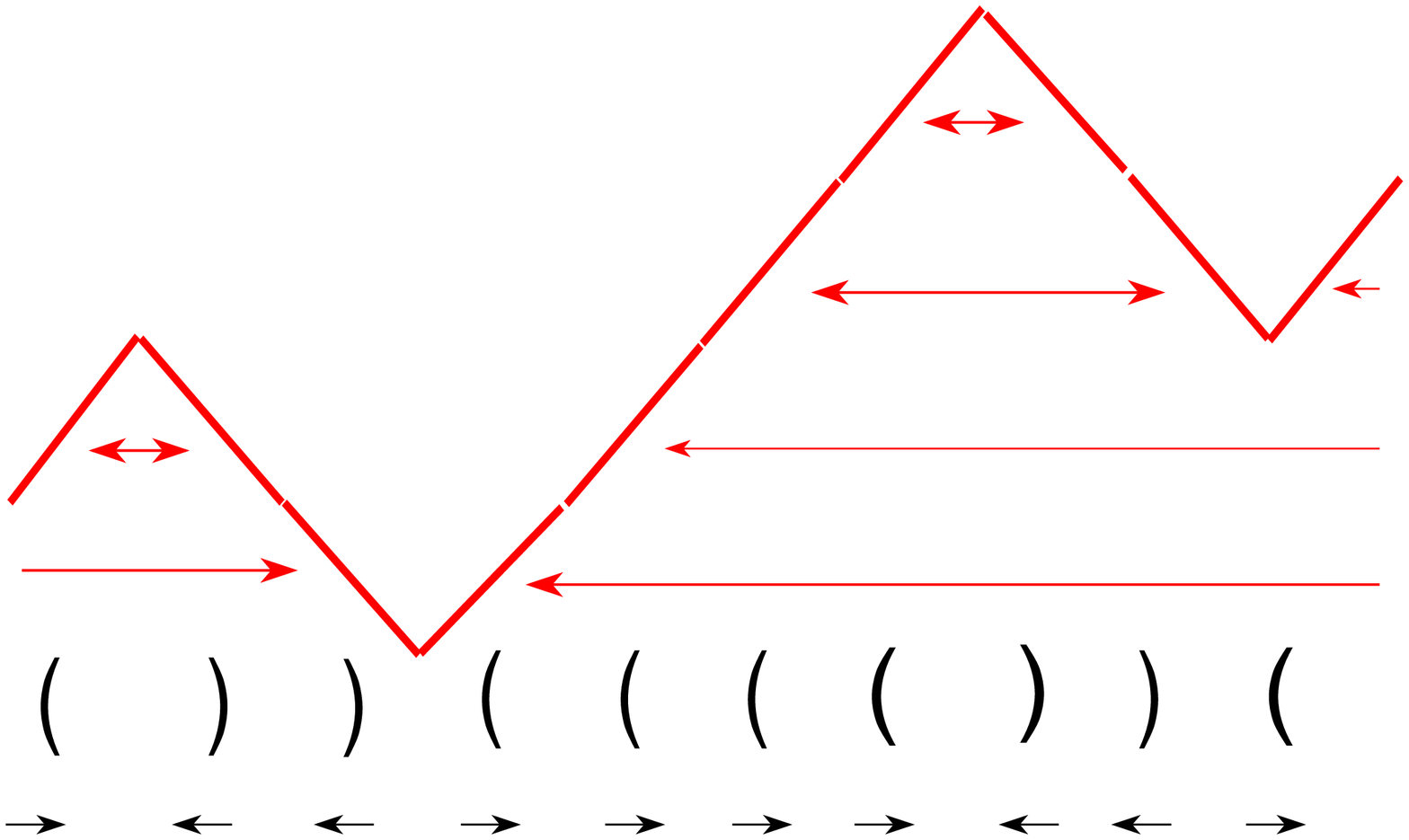}
\caption{\label{fig:Notations}In this figure we show the different ways in which our basis states can be expressed: as a sequence of spins up or down, as Dyck words, or as paths. It also shows the relation between matched parentheses and matched steps.}
\end{figure}

Parenthesis matching can be obtained in the following way: for any word of the form $W = w_1 ( w_2 ) w_3$, we have that $w_2$ must be a Dyck word if and only if the two parentheses we have written out are matched, since otherwise, either the left or right parentheses would instead be matched with an unmatched parenthesis of $w_2$. \\

As a result, the path corresponding to $w_2$ never goes further down than its starting point. The matching of parentheses can then be translated into matching steps in the following way: for a given up step from $y = y_0$ to $y = y_0 + 1$, the first down step after it which goes from $y = y_0 + 1$ to $y = y_0$, if it exists, is the \textit{matching down step} which corresponds to the matching right parenthesis. The two-dimensional nature of paths make matchings significantly easier to parse, at the expense of using a less compact notation.

\begin{figure}
\includegraphics[width=0.65\textwidth]{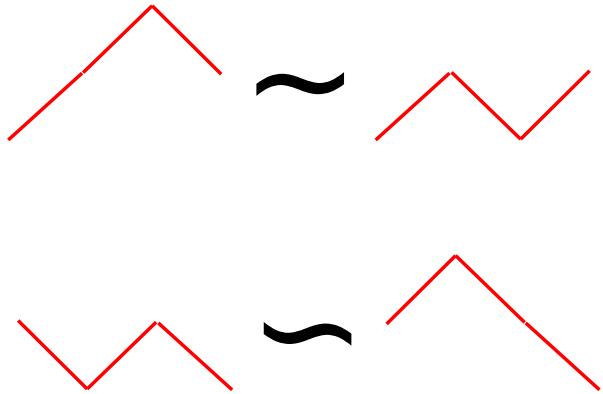}
\caption{\label{fig:FredkinMoves}The Fredkin moves in path notation.}
\end{figure}

\subsection{The ground state}

The ground state of our model with 2n sites $\ket{D_n}$ can be written as a uniform superposition of the basis states corresponding to Dyck words(or paths). 

To give some low-n examples, using the bracket notation where $\ket{\uparrow} = \ket{\leftpar}$ and $\ket{\downarrow} = \ket{\rightpar}$, we have:

\begin{align*}
\ket{D_{1}} = \ket{\leftpar \rightpar}\\
\ket{D_{2}} = \ket{\leftpar \rightpar\leftpar \rightpar} + \ket{\leftpar \leftpar\rightpar \rightpar}\\
\ket{D_3} = \ket{\leftpar \leftpar \leftpar \rightpar \rightpar \rightpar} + \ket{\leftpar \leftpar \rightpar \leftpar \rightpar \rightpar} + \ket{\leftpar\rightpar \leftpar \leftpar \rightpar  \rightpar} + \ket{\leftpar \leftpar  \rightpar \rightpar \leftpar\rightpar}+ \ket{\leftpar\rightpar \leftpar \rightpar\leftpar   \rightpar}
\end{align*}

and

\begin{align*}
\ket{D_4} = \ket{\leftpar \leftpar \leftpar\rightpar \leftpar  \rightpar \rightpar  \rightpar} + \\ 
\ket{\leftpar \leftpar \leftpar \rightpar \leftpar  \rightpar \rightpar  \rightpar} + \ket{\leftpar \leftpar \leftpar \rightpar \rightpar  \leftpar  \rightpar \rightpar} + \ket{\leftpar\leftpar \rightpar \leftpar  \leftpar  \rightpar \rightpar  \rightpar} +\\ \ket{\leftpar \rightpar\leftpar \leftpar  \leftpar  \rightpar \rightpar  \rightpar} + \ket{\leftpar \leftpar \leftpar \rightpar \rightpar  \rightpar \leftpar  \rightpar} + \ket{\leftpar \leftpar  \rightpar \rightpar \leftpar \leftpar  \rightpar  \rightpar} +\\
+ \ket{\leftpar \leftpar \rightpar \leftpar  \rightpar \leftpar \rightpar  \rightpar} 
+ \ket{\rightpar \leftpar\leftpar \leftpar   \rightpar \leftpar \rightpar  \rightpar} 
+ \ket{\leftpar \leftpar \rightpar  \leftpar \rightpar  \rightpar \leftpar  \rightpar} \\
+ \ket{\leftpar \leftpar  \rightpar  \rightpar  \leftpar \rightpar\leftpar \rightpar} 
+ \ket{\leftpar \rightpar \leftpar \leftpar  \rightpar  \rightpar\leftpar \rightpar} 
+ \ket{\leftpar \rightpar\leftpar \rightpar\leftpar \leftpar  \rightpar  \rightpar }  \\
+ \ket{\leftpar  \rightpar \leftpar  \rightpar \leftpar \rightpar \leftpar \rightpar} 
\end{align*}

which have 1, 2, 5, and 14 terms, respectively.

To see that it is the unique ground state, we do the following: first, we show that $\ket{D_n}$ is indeed a zero eigenstate of all individual terms in the Hamiltonian. Next, we observe that this implies frustration freeness: each term is positive semidefinite, therefore any other eigenstate with zero eigenvalues must be a zero eigenstate of all individual terms. Finally, we show that any state with that last property must be equal to $\ket{D_n}$. Alternatively, one can simply apply the orbit theorem in the appendix to the bulk Hamiltonian, and check which ground states are also ground states of the boundary terms.\\

We start with the boundary terms. Since every Dyck word must start with an opening parenthesis and end with a closing parenthesis(for walks, start with a step up and end with a step down), $\ket{D_n}$ is indeed a ground state of the boundary terms.\\

Next, we consider the bulk terms. Consider the term $$\ket{\uparrow_j}\bra{\uparrow_j} \otimes \ket{S_{j+1,j+2}}  \bra{S_{j+1,j+2}} = \big[ \ket{\leftpar \leftpar \rightpar}_j - \ket{ \leftpar \rightpar \leftpar}_j \big] \big[ \bra{\leftpar \leftpar \rightpar}_j - \bra{ \leftpar \rightpar \leftpar}_j \big] $$ , where the index j denotes the position of the first site. \\

This term is a projector onto the state $\ket{\leftpar \leftpar \rightpar}_j - \ket{ \leftpar \rightpar \leftpar}_j $. We then observe that we may generically write the Dyck state as $\ket{\psi_1} \otimes \ket{\leftpar \leftpar \rightpar}_j \otimes \ket{\psi_2} + \ket{\chi_1} \otimes \ket{\leftpar \rightpar \leftpar}_j \otimes \ket{\chi_2} + \ket{\psi} $, where $\ket{\psi}$ is orthogonal to both $\ket{\leftpar \leftpar \rightpar}_j$ and $\ket{ \leftpar \rightpar \leftpar}_j$. By definition, the last term is annihilated by our projector, so we may focus on the first two.

However, we note that the invariance of the set of Dyck words under the Fredkin moves \eqref{FredkinMoves} give us that if $w_1  w_2$  is a Dyck word, so is $w_1 \leftpar \rightpar \leftpar w_2$ and vice versa. Therefore, in our expression above we must have $\ket{\psi_1} = \ket{\chi_1}$ and $\ket{\psi_2} = \ket{\chi_2}$, and we have $\ket{D_n} = \ket{\psi_1} \otimes \big[ \ket{\leftpar \leftpar \rightpar}_j  + \ket{\leftpar \rightpar \leftpar}_j \big]\otimes \ket{\psi_2} + \ket{\psi}$. Both terms are orthogonal to  $\ket{\leftpar \leftpar \rightpar}_j - \ket{ \leftpar \rightpar \leftpar}_j $ and will be annihilated by our projector. The process is analogous for the second bulk term. \\

Therefore, the Dyck state is indeed a ground state of all individual terms of our Hamiltonian. Since such a state exists, any other ground state of the Hamiltonian must also be a ground state of all individual terms.\\

Now, we consider the possible ground states of the bulk terms. We now know that it must be a ground state of each individual term. We may then perform the above calculation in reverse: we break up an arbitrary ground state candidate $\ket{\psi}$ into $\ket{\Psi} = \ket{\psi_1} \otimes \ket{\leftpar \leftpar \rightpar}_j \otimes \ket{\psi_2} + \ket{\chi_1} \otimes \ket{\leftpar \rightpar \leftpar}_j \otimes \ket{\chi_2} + \ket{\psi} $, where $\ket{\psi}$ is orthogonal to both $\ket{\leftpar \leftpar \rightpar}_j$ and $\ket{ \leftpar \rightpar \leftpar}_j$. If this is annihilated by the projector $ \big[ \ket{\leftpar \leftpar \rightpar}_j - \ket{ \leftpar \rightpar \leftpar}_j \big] \big[ \bra{\leftpar \leftpar \rightpar}_j - \bra{ \leftpar \rightpar \leftpar}_j \big] $, we must have $\big[ \bra{\leftpar \leftpar \rightpar}_j - \bra{ \leftpar \rightpar \leftpar}_j \big] \ket{\Psi} = 0$. Expanding this, we get  $\ket{\psi_1} \otimes \bra{\leftpar \leftpar \rightpar}_j\ket{\leftpar \leftpar \rightpar}_j \otimes \ket{\psi_2} - \ket{\chi_1} \otimes \bra{\leftpar \rightpar \leftpar}_j \ket{\leftpar \rightpar \leftpar}_j \otimes \ket{\chi_2}$, so we come to the conclusion that for any such ground state $\ket{\chi_1} = \ket{\chi_2}$ and $\ket{\psi_1} = \ket{\psi_2}$. In other words, the parts of the ground state containing $\ket{\leftpar \rightpar \leftpar}_j$ and $\ket{\leftpar \leftpar \rightpar}_j $ are exactly equal. We can then simply repeat this for all bulk terms in the hamiltonian.\\

This is the key observation we needed to make. Writing out the ground state as $\ket{\Psi} = \sum_i c_i \ket{w_i}$, we have just proven that if the words $w_i$ and $w_j$ are related by a Fredkin move, then $c_i = c_j$. \\

To formulate this in terms of an explicit orthogonal basis for our ground states, we can use our discussion of equivalence classes under Fredkin moves in section \ref{sec:eqclass} to define the states $\ket{C_{a,b}(N)} = \sum_{w_i \in C_{a,b}(N)} \ket{w_i}$ which are direct superpositions of all states in the equivalence class $C_{a,b}(N)$. Then, we see that any ground state $\ket{\Psi}$ of the bulk terms can be written as  $\ket{\Psi} = \sum_{a,b} \ket{C_{a,b}(N)}$, so the states $\ket{C_{a,b}(N)}$ form a complete basis for the space of bulk term ground states.\\

Among these, for even N = 2n, only $\ket{C_{0,0}(N)} = \ket{D_n}$ which has no unpaired parentheses/steps is also a ground state of the boundary terms, which finally proves that the Dyck state is indeed the unique ground state of our Hamiltonian when the boundary terms are introduced.

\subsubsection{Schmidt decomposition and measures of entanglement}

The Dyck state is highly entangled. If we break up the chain into two halves with lengths L and N-L, where without loss of generalization we may assume that $L < N-L$. The Schmidt rank will then be $ \xi = \myfloor{\frac{L}{2}}$, while the the entanglement entropy will scale as $$S = \frac{1}{2} \log(L) + O(1)$$.\\

Our primary tool to study the entanglement between blocks will be the the Schmidt decomposition. The Dyck state can be decomposed into the states

\begin{equation}
\ket{D_n} = \sum_{m = 0 }^{ L} \sqrt{p_m} \ket{ C_{0,m}(L) } \otimes \ket{ C_{m,0}(N-L) }
\end{equation}

where we define the states $\ket{C_{a,b}(L)}$ as a direct generalization of the Dyck state, which is a direct superposition of the basis states corresponding to the equivalence classes  $C_{a,b}(N)$ that we defined in section \ref{sec:eqclass}.\\

In particular, we note that $C_{a,b}(L)$ is non-empty if and only if $a + b < L$ and the parity of $a + b$ matches that of L. As a result, the nonzero terms $\ket{ C_{0,m}(L) } \otimes \ket{ C_{m,0}(L) }$ will be those with even $m < L$ if $L$ is even, or odd $m < L$ if $L$ is odd. \\

This leads to our above expression for the Schmidt rank, since the number of even integers $\leq$ L if L is even is given by $\frac{L}{2}$, and the number of odd integers less than L if L is odd is given by $\frac{L-1}{2}$. The two expressions may be combined concisely with the floor function.\\

Now we may move on to compute the entanglement entropy. We start by deriving a useful expression for $p_m$, which must be the product of the normalization factors of $\ket{ C_{0,m}(L) }$ and $ \ket{ C_{m,0}(L) }$ divided by the normalization factor of $\ket{D_n}$. This gives us $p_m = \frac{|C_{0,m}(L)| |C_{0,m}(N-L)| }{|C_{0,0}(N)|}$.\\

To find a useful expression for the $p_m$, we will start by cleaning up our notation a bit. We have already written $N = 2n$. We will split the expression into two cases where $L$ is odd or even.\\

For the even case, we write $L = 2l$, and $m = 2h, h \in \left[ 0,l \right] $ as the $p_m$ will be zero for odd m.  Then for the $h < l$ case we may simplify:

\begin{align*}
|C_{0,m}(L)| = \binom{2l}{l + h} - \binom{2l}{l + h + 1} = \frac{2h + 1}{l + h + 1} \binom{2l}{l + h}  \\
|C_{m,0}(N-L)| = \binom{2(n - l)}{n - l+ h} - \binom{2(n - l)}{n - l + h + 1} =  \frac{2h + 1}{ n-l + h + 1} \binom{2(n-l)}{n - l + h}
\end{align*}

 while in the odd case we write $L = 2l + 1$ and $m = 2h + 1, h \in \left[ 0,l \right] $ which gives us
\begin{align*}
|C_{0,m}(L)| = \binom{2l+ 1}{l + h + 1} - \binom{2l + 1}{l + h + 2} = \frac{2h + 2}{l + h + 2} \binom{2l+1}{l + h + 1}  \\
|C_{m,0}(N-L)| = \binom{2(n-l) - 1}{(n-l) + h} - \binom{2(n-l) - 1}{(n-l) + h + 1} 
\end{align*}

which in the even case gives us:

\[
p_h = \frac{1}{Cat(n)} \frac{(2h + 1)^2}{(h + 1 + \frac{n}{2})^2 - (l -\frac{n}{2})^2} \binom{2l}{l + h} \binom{2(n-l)}{n - l + h}
\]

Our next step is then simply to use the Gaussian approximation for binomials $\binom{2n}{n+k} \approx \frac{4^n}{\sqrt[]{\pi n}} \exp(-\frac{k^2}{n})$, which gives us:

\[
p_h \approx \frac{4^n}{\pi Cat(n) \sqrt[]{l(n-l)}} \frac{(2h + 1)^2}{(h + 1 + \frac{n}{2})^2 - (l -\frac{n}{2})^2} \exp\left[-h^2 (\frac{1}{l} +\frac{1}{n-l}) \right]
\]

Which is a good approximation when n is large, and peaks when $\frac{h^2 n}{l(n-l)} \approx 1$. Since the Gaussian factor will suppress the parts where h is comparable to $\frac{n}{2}$, our final approximation becomes

\[
p_h \approx \frac{h^2}{Z} \exp\left[-h^2 \frac{n}{l(n-l)} \right]
\]

where $Z$ is a normalization factor.\\

This expression is largely analogous to the one in \cite{bravyi2012criticality},  \cite{movassagh2015power}, and from here on we may proceed in a manner fully analogous to the method outlined in these publications. The key idea is to approximate sums with integrals, and performing a change of variables to isolate the $L(N-L)$ dependence. In particular, we have:

\begin{align*}
S = -\sum_h p_h log(h) \approx -\int_0^{\infty} p_x log(p_x) dx = \left[ \sqrt{ \frac{l(n-l)}{l}} d\alpha = dx  \right] = \\
-\int_0^{\infty}  \sigma \rho_\alpha \log(\sigma \rho_\alpha)  \sigma^{-1} d\alpha = -\int_0^{\infty}   \rho_\alpha [\log(\sigma) +  \log(\rho_\alpha)] d\alpha = \\
- \log(\sigma) - \int_0^{\infty}   \rho_\alpha \log(\rho_\alpha) d\alpha  = \frac{1}{2} \log\left[ \frac{l(n-l)}{l} \right] -  \int_0^{\infty}   \rho_\alpha \log(\rho_\alpha) d\alpha
\end{align*}

Where $ p_x = \frac{x^2}{\tilde{Z}}\exp\left[-x^2 \frac{n}{l(n-l)} \right]$, $\tilde{Z}$ is a normalization factor so that $\int p_x dx = 1$ and $\alpha = \sigma x $ and $\sigma = \sqrt{\frac{N}{L(N-L)} } $. Furthermore, we define $\rho_{\alpha} = \frac{1}{Z'} \alpha^2 \exp[-\alpha^2]$ where $Z'$ is a constant such that $\int \rho_{\alpha} = 1$, which gives us $ p_x = \sigma \rho_{\alpha}$ (with a factor $\sigma^3$ from the normalization factor and a $\sigma^{-2}$ from $ x^2 = \sigma^{-2} \alpha^2 $ ), which we used to insert the correct sigma factors. Thus, when N and L are large enough for the integral approximation to be valid, we have:

\[
S = \frac{1}{2} \log \left[ \frac{L(N-L)}{N} \right] + O(1)
\]

\subsubsection{Matrix product state definition}

To find the MPS formulation we let ourselves be guided by the form of the Fredkin moves (\ref{FredkinMoves}). We would like the product $A^{(} A^{)}$ to commute with $A^{(}$ and $A^{)}$, or at least have a commutator which can easily be projected out, while having a more nontrivial expression for $A^{)} A^{(}$. Here $A$ is a local matrix in matrix state represettation of the ground state. It can be compared with $L$ operator of quantum inverse scattering method.\\

We can do this by setting $A^{(}_{ij} = \delta^{i+1,j}$ and $A^{)}_{ij} = \delta{i,j+1}$ for $ \infty > i,j \geq 1$ and 0 otherwise. In that case, before we truncate to a finite bond dimension, we have $A^{(} A^{)} = 1$ while $A^{)} A^{(}$ is a projector that does not commute with $A^{(}$ or $A^{)}$.\\

To obtain the Dyck state, we introduce the boundary (basis) vector $v_i = \delta^{i,1}$. Then we have that a the matrix element $v^{T} A^{k_1} A^{k_2}... A^{k_{N-1}} A^{k_N} v$ of the product will be nonzero if and only if the indices $k_1 k_2 ... k_{N-1} k_{N}$, which each are $\rightpar$ or $\leftpar$-valued, together form a Dyck word.\\

This is true because for any index values we can remove $A^{(} A^{)} = 1$ without changing the value of the product until we are left with an expression of the form $v^{T} {[A^{)}]}^a {[A^{(}]}^b v$. Hence the matrix product depends only on to which equivalence class of the word formed by the indices belongs to under the Fredkin moves, and using the boundary vector $v_i = \delta^{i,1}$ the relevant matrix element this will yield zero if the word belongs to any equivalence class other than that of Dyck words. Hence, the matrices above do indeed give a matrix product state description of the Dyck state.\\

We now only have to check whether we can truncate our infinite-dimensional matrices to finite-dimensional matrices to obtain an MPS description that is actually useful. This can be done by observing that the entries of $A_{ij}$ with i or j larger than N do not affect the product [ground state] $$v^{T} A^{i_1} A^{i_2}... A^{i_{N-1}} A^{i_N} v$$, meaning that we can simply truncate it to an N-dimensional MPS (which can be refined to N/2 exactly, and to $O(\sqrt{N})$ with arbitrarily small errors as N becomes large).

% \subsection{Correlation functions}

% The matrix product state description of our ground state is a very powerful tool for computing correlators, as they can be computed simply by applying standard theory.

\subsection{Spin wave solutions}

The Fredkin model is closely related to the XXX Heisenberg model. In particular, since it commutes with z-component of total spin $Z = \sum_j \sigma_j^z$, we can straightforwardly apply the coordinate Bethe ansatz analyze the 1-magnon sector of the model. \\

In fact, since our model has two invariants $a$ and $b$, we can obtain a slightly stronger result, namely that all sectors with $a + b = N-2$ have a spectrum fully isomorphic to the 1-Magnon sector of a ferromagnetic Heisenberg XXX chain of length N-1.\\

The key insight is that when viewed as a path, these equivalence classes consist of a long dip with a single $/\backslash$  peak that may be moved around. The N-1 basis states of this class, which we call $\ket{\psi_{j}}$ , correspond to the path where the single peak has an up-step at site $j$ and the down-step at site $j+1$.\\

For each j, the only terms in the Fredkin Hamiltonian that can affect $\ket{\psi_{j}}$ non-trivially will be the three-site terms that act on both site j and j+1, which leads to four candidates. By brute force consideration of all cases, we will see that up to constants there will be exactly two terms remaining, one which maps $\ket{\psi_{j}}$ to $-\ket{\psi_{j+1}}$ and one which maps $\ket{\psi_j}$ to $-\ket{\psi_{j-1}}$. Therefore, the restriction of the Fredkin model Hamiltonian with open boundary conditions to this sector is isomorphic to the ferromagnetic Heisenberg XXX Hamiltonian with open boundary conditions restricted to the one-magnon sector, on a chain of length N-1.

\subsection{Ground states with alternative boundary conditions}

While we have been focusing on the ground states with boundary terms $H_{\partial} = \ket{\downarrow_1}\bra{\downarrow_1} +\ket{\uparrow_N}\bra{\uparrow_N}$, our model does exhibit some rather surprising properties with other boundary conditions.\\

\begin{figure}
\includegraphics[width=0.65\textwidth]{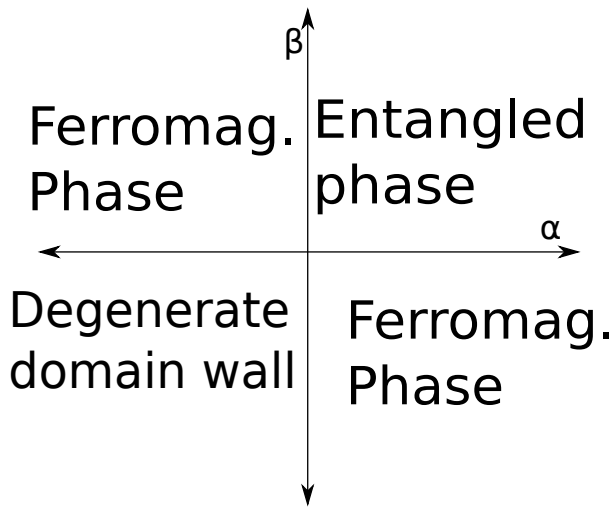}
\caption{\label{fig:phasediagram}A phase diagram describing the different phases of the model as we vary the boundary conditions.}
\end{figure}

The first generalization we could make is to a $H_{\partial}(\alpha, \beta) = (1 - \alpha\sigma^z_0) +(1 + \beta \sigma^z_N)$, which yields our previous boundary terms for $\alpha = \beta = 1$. For $\alpha = - \beta $, the bulk ground states $\ket{C_{N,0}}$ and $\ket{C_{0,N}}$ will be ground states of the case where $\alpha = 1$ and $\alpha = -1$, respectively. For $\alpha = \beta = -1$, we have an $N-2$ degeneracy as all the domain wall states $\ket{C_{a,b}}$ with $a + b = N$ and $a,b \geq 1$ are ground states. Since in call cases we have bulk ground states which are also boundary ground states, the model stays frustration free meaning these are the only possible ground states for these boundary conditions.

Furthermore, since the model is frustration free for all possible sign combinations of $\alpha, \beta$, the ground state depends \textit{only} on the signs of $\alpha, \beta$ and not on their magnitude. In other words, phase transitions may happen only when  $\alpha$ or $\beta$ is zero. These results may be combined into the phase diagram in figure \ref{fig:phasediagram}.\\

\subsubsection{Periodic boundary conditions}

We get a somewhat more surprising result with periodic boundary conditions. Here, we note that the Hamiltonian commutes with $Z = \sum_j \sigma^z_j$ and with the cyclic permutation operator $C^N$ which maps the site $j$ to the site $j+1$. We may use the eigenvalues of these two operators to index our ground states in the model with periodic boundary conditions.\\

For odd N, all ground states are eigenstates with eigenvalue 1 of $C^N$, and can be indexed solely using the $Z$ operator. In this case all ground states also happen to be eigenstates of the Heisenberg XXX model and are totally symmetric. The degeneracy in this case is N.\\

For even $N = 2n$, the situation is different, and we have a ground state degeneracy N+1. N of these are eigenvectors with unit eigenvalues of the $C^N$ operator and are also Heisenberg XXX ground states. However, there is one additional state, which we call the \textit{anomalous state} $\ket{D_{an}}$, which is an eigenstate of $C^N$ with eigenvalue -1, and a zero eigenstate of $Z$. Is is given by:

\[
\ket{D_{an}} = \frac{1}{\sqrt{\binom{2n}{n} } } \sum_{m=0}^n \sqrt{ |C_{m,m}| } (-1)^m  \ket{C_{m,m}}
\]

To see why this is the case, we consider what happens to the equivalence classes as we add Fredkin moves that continue periodically across the edge.\\

We may define a periodic matching across the edge, which is most easily done by going back to the Dyck word formalism: two parentheses are periodically matched if they are matched in the usual sense for at least one cyclic permutation of the Dyck word. Two steps are periodically matched if their corresponding parentheses in the Dyck word formalism are periodically matched.\\

The key idea is then that we can use our previous algorithm that reduces any path to a series of down steps (right parentheses), followed by a series of $/ \backslash $ peaks (matched pairs), followed by a series of up steps (left parentheses). However, whenever we have a down-step at site 1 and an up step at site N, we have a peak wound across the edge.\\

If we have any unmatched step adjacent to it, we can move this peak away from the edge to join the other peaks. By repeating this process, we can then always reduce to the case where all matched steps are adjacent to their match, and we are left with Z unmatched up steps and no unmatched down steps if Z is positive, or $|Z|$ unmatched down steps and no unmatched upsteps if Z is negative. \\

At this point, we are left with two cases: Either there is a peak wound across the edge, or there isn't. If $Z > 0$, we can move an unmatched up-step to site N-1, move the peak past it, and we will be left with no peaks wound across the edge. For $Z < 0$, we may perform the mirror image of this algorithm. However, for Z = 0 there is no way to transform one case into the other.\\

Thus, for $Z \neq 0$ we are left with exactly one equivalence class for each value of Z:

\[
C_Z = \begin{cases} 
      \cup_k C_{Z+k,k} & \textrm{ if $Z > 0$} \\
      \cup_k C_{k,Z+k} & \textrm{ if $Z < 0$} \\
   \end{cases}
\]

which all just contain all basis states that satisfy $\sum_j \sigma^z_j = Z$. Using the orbit theorem we see that these all generate the same ground states as the Heisenberg XXX Hamiltonian.

In the the Z = 0 case, we have two equivalence classes that classify whether the number of pairs matched across the edge is odd or even. Each of these gives us a ground state. The symmetric sum of these is the $Z = 0$ ground state of Heisenberg XXX, while the difference yields the anomalous state above.

\section{Generalizing to a colored SU(k)-symmetric model with a square root entanglement entropy leading term}
It is possible to generalize the Fredkin model to exhibit a colored Dyck state as its ground state in a manner largely analogous to ~\cite{movassagh2015power}, which yields an exponentially increasing Schmidt rank and entanglement entropy. However, since we consider Dyck paths rather than Motzkin paths the Hilbert space for each site will be of even dimension, which allows us to naturally write it as a tensor product of a two-dimensional Hilbert space and a k-dimensional "color" Hilbert space.\\

In particular our colored model has a manifest $U(1) \times SU(k)$ symmetry which generalizes the U(1) symmetry of the uncolored Fredkin model, where k is a positive integer giving the number of colors. \\

The expression for the Fredkin Hamiltonian above in terms of Pauli matrices can be generalized to an expression in terms of $SU(2) \otimes SU(c)$ matrices for each site j of the form $T^a_j = \sigma_j^z \otimes t_j^a$ where the $t^a$ are the generators of SU(k) in the fundamental representation. For convenience we will also use the $SU(2)$ matrices that by abuse of notation we write $ \sigma_j^i = \sigma_j^i \otimes 1_j$.\\

Finally, we define the projection operators $P^{\pm}_j = \frac{1 \pm \sigma^z_j}{2}$ and the cyclic permutation operator $C_{j_1,j_2, j_3}$ which cyclically permutes the sites $j_1, j_2, j_3$.\\

This allows us to conveniently write the Hamiltonian in a manifestly SU(k)-invariant way as:

\begin{equation}
H = H_F + H_X + H_{\partial}
\end{equation}

where

\begin{align*}
H_F = \sum_{j=1}^{N-2} P_{j}^{+}P_{j+1}^{+}P_{j+2}^{-} + P_{j}^{+}P_{j+1}^{-}P_{j+2}^{+} -P_{j}^{+}P_{j+1}^{+}P_{j+2}^{-}C_{j,j+1,j+2} - C^{\dagger}_{j,j+1,j+2}P_{j}^{+}P_{j+1}^{+}P_{j+2}^{-} + \\
+ P_{j}^{+}P_{j+1}^{-}P_{j+2}^{-} + P_{j}^{-}P_{j+1}^{+}P_{j+2}^{-} -P_{j}^{+}P_{j+1}^{-}P_{j+2}^{-}C^{\dagger}_{j,j+1,j+2} - C_{j,j+1,j+2}P_{j}^{+}P_{j+1}^{-}P_{j+2}^{-}
\end{align*}

\begin{align*}
H_X = \sum_{j=1}^{N-1}  P_{j}^{+}P_{j+1}^{-}\left[ \sum_a (T^a_{j} + T^a_{j+1})^2 \right] \\
H_\partial = P^{-}_{1} + P^{+}_{N} 
\end{align*}

All three of these operators manifestly commute with the $SU(k)$ generators $T^a = \sum_{j=1}^N T^a_j$, and with the operator $Z = \sum_{j=1}^N \sigma^z_j$, which together generate the symmetry group $SU(k) \times U(1)$ of the model, which can naturally be viewed as a full $U(k)$ symmetry. The Hilbert space at each site is then most naturally viewed as a sum of two mutually conjugate fundamental representations of U(k).

To illustrate the Hamiltonian more clearly, we can expand it in the product state basis as a sum of projectors. In that case, we can naturally write:

\begin{align*}
H_F = \sum_{j=1}^{N-2} \sum_{c_1,c_2,c_3} (\ket{\downarrow^{c_1}_j \uparrow^{c_2}_{j+1} \downarrow^{c_3}_{j+2}} -\ket{ \uparrow^{c_2}_{j} \downarrow^{c_3}_{j+1} \downarrow^{c_1}_{j+2}} )(\bra{\downarrow^{c_1}_j \uparrow^{c_2}_{j+1} \downarrow^{c_3}_{j+2}} -\bra{ \uparrow^{c_2}_{j} \downarrow^{c_3}_{j+1} \downarrow^{c_1}_{j+2}} ) + \\
(\ket{\uparrow^{c_1}_j \uparrow^{c_2}_{j+1} \downarrow^{c_3}_{j+2}} -\ket{  \uparrow^{c_2}_{j} \downarrow^{c_3}_{j+1}\uparrow^{c_1}_{j+2} } )(\bra{\uparrow^{c_1}_j \uparrow^{c_2}_{j+1} \downarrow^{c_3}_{j+2}} -\bra{  \uparrow^{c_2}_{j} \downarrow^{c_3}_{j+1}\uparrow^{c_1}_{j+2} } )
\end{align*}

\begin{align*}
H_X = \sum_{j=1}^{N-1} \left[ 1 - \frac{1}{k}(\sum_{c_1} \ket{\uparrow^{c_1}_j \downarrow^{c_1}_{j+1}}) (\sum_{c_2} \bra{\uparrow^{c_2}_j \downarrow^{c_2}_{j+1}}) \right] P_{j}^{+}P_{j+1}^{-}\\
H_\partial = P^{-}_{1} + P^{+}_{N} = \sum_{c_1} \ket{\downarrow^{c_1}_1}\bra{\downarrow^{c_1}_1} + \ket{\uparrow^{c_1}_N}\bra{\uparrow^{c_1}_N}
\end{align*}

where $c_1,c_2, c_3$ are colour indices running from 1 to k.\\

This model can be shown to reduce to the Fredkin model when $k = 1$, and in general the ground states can be obtained in a fully analogous manner and described in terms of combinatorial objects. The extension we will need is colored Dyck paths.

\subsection{Invariants of the Hamiltonian and characterization of the bulk ground states}

The basic idea here is the same as before: identifying basis states with paths, where we generalize the k=1 case by considering colored paths.\\

Our first move should then be to identify invariants of $H_F$ that may allow us to block diagonalize it. In the uncolored case, the only invariants were the number of unmatched up/down steps in the path notation or unmatched left/right parentheses in the word notation.\\

In the colored case, the Fredkin moves have more invariants, as we have already seen from the fact that the Hamiltonian commutes with the $T^a$ operators.\\

The key observation here is that the Hamiltonian generates the colored Fredkin moves, which we have presented in path notation in figure \ref{fig:ColoredMoves}. We see that these must preserve the color and order of the unmatched steps, as well as the color of the matched pairs.\\

Since we can move any $/ \backslash $ peak to any point in the path, we may use our previous algorithm where we repeatedly move the last peak to the beginning of the path until we are left with a series of single peaks followed by a dip, which up to permutations of the peaks will give the same result for any member of an equivalence class. This allows us to clearly see what equivalence classes we will be left with.\\

Specifically, the equivalence classes are fully specified by an ordered list of the colors of the unmatched up and down steps, along with the number of matched pairs corresponding to each color combination along the path.\\

To keep our approach SU(k)-covariant, it is convenient to work with basis states that are not quite product states when trying to use the orbit theorem. Since the colors of up and down steps form mutually conjugate fundamental representations of SU(k), they can combine into an SU(k) singlet or a member of the adjoint representation of SU(k). Our choice of basis states will thus correspond to paths($\sigma^z_j$ eigenstates), where the colors of each matched pair combine to either a color singlet $\ket{0}$, or to an element $\ket{a}$, $a \in [1,k^2-1]$ in the adjoint representation of SU(k).\\

Therefore, we may define the equivalence classes $C_{a,c_a,b,c_b,N^0,N^1 ... N^{k^2 - 1} }$, where a,b are analogous to in the uncolored case, $c_a, c_b$ are lists of length a,b, of the colors of the unmatched steps, $N^0$ is the number of matched pairs whose colors combine to singlets, and $N^a$ is the number of matched pairs that combine into the adjoint representation state a, and where we have the additional relation $2(N^0 + \sum_{a=1}^{k^2-1} N^a) + a + b = N$.\\

\begin{figure}
\includegraphics[width=0.65\textwidth]{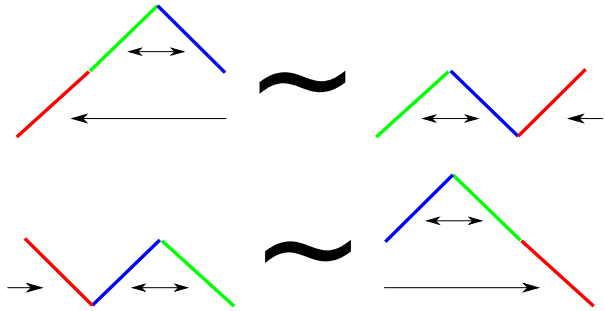}
\caption{\label{fig:ColoredMoves}The colored Fredkin moves in path notation. Note that the color of the matched pair and of the unmatched step is preseved, as well as the height of the unmatched step.}
\end{figure}

\subsubsection{The bulk ground states}

\begin{figure}
\includegraphics[width=0.65\textwidth]{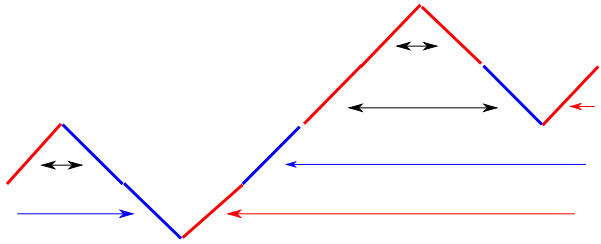}
\caption{\label{fig:ColoredPath}An example of a path corresponding to a product state basis element.}
\end{figure}

\begin{figure}
\includegraphics[width=0.65\textwidth]{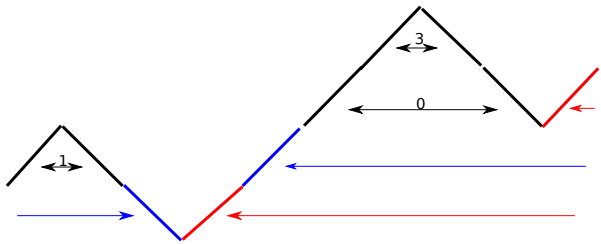}
\caption{\label{fig:CovariantPath}An example of a path corresponding to a covariant basis element, with one SU(k) singlet pair and two pairs in the SU(k) adjoint representation.}
\end{figure}

The colored Fredkin moves generates a group that permute our chosen SU(k)-covariant basis vectors, preserve the equivalence classes, and can map any element of an equivalence class to any other. Therefore, as a consequence of the orbit theorem each equivalence class will have a unique ground state $\ket{C_{a,c_a,b,c_b,N^0,N^1 ... N^{N^2 - 1} }}$ which is just the symmetric sum of all basis states in the class. These provide a nice orthogonal basis for all ground states of $H_F$.\\

This characterization is exactly what we will need to find out the combined ground state of $H_F + H_X + H_{\partial}$. The condition that the combined ground state is also a ground state of $H_X$ leaves us with $N^a = 0$ for all a, and clearly the entire subspace that does not fulfill this can not have any common $H_F + H_X$ zero eigenstate. Since in that case $N^0 = N - (a + b)$, we can simply call the bulk $H_F + H_X$ ground states $\ket{C_{a,c_a,b,c_b}}$. \\

For an even number of sites $N = 2n$ this leaves us with the unique ground state $\ket{C_{0,\emptyset,0,\emptyset}}$ as we add in the boundary terms, which we call the colored Dyck state $\ket{D^k_n}$.

\subsection{The colored Dyck state}

We have now fully introduced the ground state $\ket{D^k_n}$. Depending on whether we use the covariant or the product state basis. In the nonlocal covariant basis that we used in the previous section to simultaneously block-diagonalize $H_F$ and $H_X$, $\ket{D^k_n}$  is the symmetric sum of all paths colored such that all matched pairs form SU(k) singlets. \\

We may expand this expression as a sum of product states. In the local product state basis, $\ket{D^k_n}$ is instead the sum of all basis states corresponding to \textit{properly colored paths}, i.e. colored paths such that matching parentheses have the same color. The number of basis states in the sum is given by $k^n Cat(n)$. For k = 2, we will give some low-n examples in bracket notation:

\begin{align*}
\ket{D^2_2} = \frac{1}{\sqrt{2}} \left( \ket{()} + \ket{[]}  \right)\\ 
\ket{D^2_2} =  \frac{1}{\sqrt{8}} (  \ket{()()} + \ket{[]()} + \ket{()[]} + \ket{[][]}  +\\
\ket{(())} + \ket{([])} + \ket{[()]} + \ket{[[]]} ) 
\end{align*}

\begin{figure}
\includegraphics[width=0.65\textwidth]{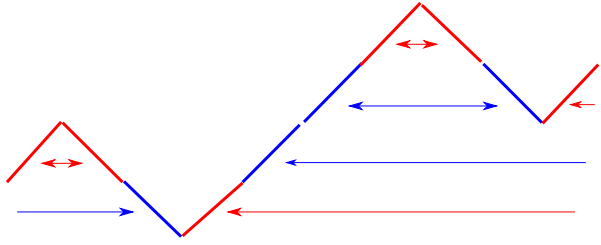}
\caption{\label{fig:pathOfColor}An example of a properly colored path.}
\end{figure}

\subsubsection{The Schmidt decomposition and measures of entanglement}

Just as before, our primary tool to analyze the entanglement of the colored Dyck state will be the Schmidt decomposition, which is given by

\begin{equation}
\ket{D^k_n} = \sum_{m = 0 }^{ L} \sum_{c \in \mathbb{Z}_k^m} \sqrt{p_{m,c}} \ket{ C_{0,\emptyset,m,c}(L) } \otimes \ket{ C_{m,c,0,\emptyset}(N-L) }
\end{equation}

where c is a list of length m consisting of the colors of the unmatched parentheses of $\ket{ C_{0,\emptyset,m,c}(L) }$ and $\ket{ C_{m,c,0,\emptyset}(N-L) }$. Since as before $$ p_{m,c}=\frac{|C_{0,\emptyset,m,c}(L)||C_{m,c,0,\emptyset}(N-L) |}{|C_{0,0,0,0}(N)|}$$, and since $\ket{D^k_n}$ is an equally weighted sum of every coloring of each path, we get $p_{m,c}$ is independent of c. By inserting the correct powers of k to account for every possible color combination we obtain $$p_{m,c} = p_m \frac{k^{\frac{L-m}{2}} k^{\frac{N-L-m}{2}}}{k^{\frac{N}{2}}} = k^{-m} p_m$$ where $p_m$ are the Schmidt coefficients from the uncolored case.\\

This allows us to compute the Schmidt rank and the Entanglement entropy in the colored case. We will start with the Schmidt rank. The number of nonzero $p_{m,c}$ coefficient is a geometric sum. In the case where L is even and $L = 2l$, the Schmidt rank is $\sum_{h=0}^{l} k^{2h} = \frac{k^{L+2} - 1}{k^2-1}$. In the case where L is odd and equal to $2l + 1$, we have $\sum_{h=0}^{l} k^{2h+1} = k \frac{k^{L+1} - 1}{k^2-1}$. The two may be combined into:

\[
\xi = k^{L\pmod 2} \frac{k^{ \myfloor{L+2} }-1}{k^2-1}
\]

Now we may move on to the entanglement entropy. By definition, we have:

\[
S = \sum_{m,c} -p_{m,c} \log(p_{m,c}) = \sum_m -p_m \log(k^{-m} p_m) = \sum_m m p_m \log(k) - p_m \log(p_m)
\]

which gives us

\begin{equation}
S = S_{1/2} + \log(k) \sum_{m = 0}^{L}  p_m m
\end{equation}

where the first term is just the log-scaling entanglement entropy from the uncolored case. The second term however, is the expectation value of the height at L for a random Dyck walk, which scales as $O(\sqrt[]{L})$ and is thus the key term leading to our violation of the area law.

\subsubsection{Computing the coefficient of the leading square root term in the entanglement entropy}

We are interested in computing the expression $\sum_{m = 0}^{L}  p_m m$. We proceed much like in the uncolored case. We consider the even case $L = 2l$ with nonzero $p_m$ when $m = 2h, h \in [0,l]$, and consider the coefficient $p_h \approx \frac{h^2}{Z} \exp\left[-h^2 \frac{n}{l(n-l)} \right]$, so that we are left with $2 \sum_h h p_h $. Replacing the sum with an integral gives us:

\begin{align*}
\sum_h h p_h = \int_{0}^{\infty} x p_x dx = \left[ \sqrt{ \frac{l(n-l)}{l}} d\alpha = dx  \right] = \int_0^{\infty}  \sigma \rho_\alpha \sigma^{-1} \alpha  \sigma^{-1} d\alpha \\
= \sigma^{-1} \int_0^{\infty} \alpha \rho_{\alpha} d \alpha = \sqrt{\frac{l(n-l)}{n}}  \int_0^{\infty} \frac{\alpha^3}{Z} \exp \left[ -\alpha^2 \right] d \alpha
\end{align*}
where just like before, $ p_x = \frac{x^2}{Z'}\exp\left[-x^2 \frac{n}{l(n-l)} \right]$, $\int p_x dx = 1$ ,  $\alpha = \sigma x $ and $\sigma = \sqrt{\frac{N}{L(N-L)} } $. We move to $\rho_{\alpha} = \frac{1}{Z'} \alpha^2 \exp[-\alpha^2]$ where $Z$ is a constant such that $\int \rho_{\alpha} = 1$, which gives us $ p_x = \sigma \rho_{\alpha}$ (with a factor $\sigma^3$ from the normalization factor and a $\sigma^{-2}$ from $ x^2 = \sigma^{-2} \alpha^2 $ ), which we used to insert the correct sigma factors. \\

The integral $\int_0^{\infty} \frac{\alpha^3}{Z} \exp \left[ -\alpha^2 \right] d \alpha$ is the third absolute central moment of the normal distribution, while in this formalism the constant $\frac{1}{Z}$ is chosen such that the second moment is one. Thus, we have:

\[
S \approx S_{1/2} + \log(k) \frac{M_3}{M_2}  \sqrt{2 \frac{L(N-L)}{L}} 
\]

where $M_2$ and $M_3$ are the second and third absolute moments of the normal distribution. Computing these gives us our final approximate expression:

\[
S \approx S_{1/2} +  \frac{2}{\sqrt{\pi}} \log(k) \sqrt{2 \frac{L(N-L)}{L}} 
\]

\bibliography{references}

\begin{thebibliography}{1}

\bibitem{bennett1996concentrating}
Charles~H Bennett, Herbert~J Bernstein, Sandu Popescu, and Benjamin Schumacher.
\newblock Concentrating partial entanglement by local operations.
\newblock {\em Physical Review A}, 53(4):2046, 1996.

\bibitem{bravyi2012criticality}
Sergey Bravyi, Libor Caha, Ramis Movassagh, Daniel Nagaj, and Peter~W Shor.
\newblock Criticality without frustration for quantum spin-1 chains.
\newblock {\em Physical review letters}, 109(20):207202, 2012.

\bibitem{hild2014far}
Sebastian Hild, Takeshi Fukuhara, Peter Schau{\ss}, Johannes Zeiher, Michael
  Knap, Eugene Demler, Immanuel Bloch, and Christian Gross.
\newblock Far-from-equilibrium spin transport in heisenberg quantum magnets.
\newblock {\em Physical review letters}, 113(14):147205, 2014.

\bibitem{movassagh2015power}
Ramis Movassagh and Peter~W Shor.
\newblock Power law violation of the area law in quantum spin chains.
\newblock {\em arXiv preprint arXiv:1408.1657}, 2015.

\bibitem{nascimbene2012experimental}
Sylvain Nascimb{\`e}ne, Y-A Chen, Marcos Atala, Monika Aidelsburger, Stefan
  Trotzky, Bel{\'e}n Paredes, and Immanuel Bloch.
\newblock Experimental realization of plaquette resonating valence-bond states
  with ultracold atoms in optical superlattices.
\newblock {\em Physical review letters}, 108(20):205301, 2012.

\bibitem{porras2004effective}
Diego Porras and J~Ignacio Cirac.
\newblock Effective quantum spin systems with trapped ions.
\newblock {\em Physical review letters}, 92(20):207901, 2004.

\bibitem{schauss2015crystallization}
Peter Schau{\ss}, Johannes Zeiher, Takeshi Fukuhara, S~Hild, M~Cheneau,
  T~Macr{\`\i}, T~Pohl, Immanuel Bloch, and Christian Gro{\ss}.
\newblock Crystallization in ising quantum magnets.
\newblock {\em Science}, 347(6229):1455--1458, 2015.

\bibitem{wikiBertrand}
Wikipedia.
\newblock Bertrand's ballot theorem, 2016.
\newblock [Online; accessed 10-May-2016].

\end{thebibliography}

\appendix
\section{The orbit theorem}

In this appendix, we will prove a useful theorem that we (implicitly or explicitly) repeatedly make use of throughout this paper. It is our main tool for classifying ground states by looking at equivalence classes. We chose to call it the orbit theorem. It reads as follows:

\begin{theorem}
Let $V$ be a Hilbert space of finite dimension, and let $v_1 ... v_n$ be an orthonormal basis of V. Let G be a finite group which acts on $B = \left\lbrace v_1, v_2 ... v_n \right\rbrace $ by permutations. Suppose that G is generated by the generators $g_1, g_2 ... g_n$ and let $P_1 , P_2 ... P_n$ be the corresponding permutation matrices. Let H be a Hamiltonian defined by $H = \sum_{i = 1}^m (1-P_i^{\dagger})(1-P_i) = \sum_{i = 1}^m 2 -P_i^{\dagger} - P_i$ and let $W \subset V$ be the zero eigenspace of H. \\

Then H is positive semidefinite, is a frustration free Hamiltonian, and the dimension of the ground state space W is equal to the number of orbits under the action of G on B. Furthermore, The Hamiltonian can be block diagonalized into the different vector spaces $V_i$ spanned by the elements of $B_i$, and each block has a unique ground state which is a zero eigenstate of H. The ground states of the different blocks form an orthogonal basis for W.
\end{theorem}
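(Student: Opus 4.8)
The plan is to reduce everything to elementary linear algebra together with the standard fact that a permutation representation of a finite group decomposes along its orbits, with each orbit contributing a one-dimensional invariant subspace.

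First I would settle positive semidefiniteness and frustration freeness. Since each $P_i$ is a permutation matrix it is unitary, so $P_i^{\dagger} = P_i^{-1}$ and $(1 - P_i^{\dagger})(1 - P_i) = (1-P_i)^{\dagger}(1-P_i) = M_i^{\dagger} M_i$ with $M_i := 1 - P_i$. Hence $H = \sum_{i=1}^m M_i^{\dagger} M_i$ is a sum of positive semidefinite operators, so it is positive semidefinite, and for any $\psi \in V$ we have $\langle \psi | H | \psi \rangle = \sum_i \|M_i \psi\|^2$. Therefore $H\psi = 0$ iff $\langle \psi | H | \psi\rangle = 0$ iff $M_i\psi = 0$ for every $i$, i.e. iff $P_i \psi = \psi$ for every generator $g_i$. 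In particular every vector of $W$ is simultaneously a zero eigenvector of each summand $M_i^{\dagger}M_i$, which is exactly frustration freeness; and since (as the last step shows) $W \neq \{0\}$, the ground-state energy is $0$.

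Second I would identify $W$ with the space $V^G$ of $G$-invariant vectors. The only point needing care is the equivalence ``$\psi$ fixed by all generators'' $\Longleftrightarrow$ ``$\psi$ fixed by all of $G$'': one direction is trivial, and for the other, the set $\{g \in G : P(g)\psi = \psi\}$ is closed under products (from $P(gh)\psi = P(g)P(h)\psi$) and inverses (apply $P(g)^{-1}$ to $P(g)\psi = \psi$), hence is a subgroup containing the generators, hence equals $G$. Thus $W = V^G$.

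Third I would invoke the orbit structure. Write $B = B_1 \sqcup \dots \sqcup B_r$ for the decomposition into $G$-orbits and put $V_i = \mathrm{span}(B_i)$, so $V = \bigoplus_{i=1}^r V_i$. Since $G$ maps each orbit to itself, every $P(g)$ preserves each $V_i$, and hence so does $H$; this is the claimed block decomposition. On a single block a vector $\psi = \sum_{v \in B_i} c_v v$ is $G$-invariant iff $c_{g\cdot v} = c_v$ for all $g \in G$, and transitivity of $G$ on $B_i$ forces all $c_v$ equal, so $W \cap V_i$ is precisely the line through $\Omega_i := \sum_{v \in B_i} v$ (which is indeed $G$-invariant, being a sum over an orbit). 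Hence each block has a unique ground state $\Omega_i$ up to scale, the $\Omega_i$ have disjoint supports and are therefore mutually orthogonal, and $W = \bigoplus_i (W\cap V_i)$ has dimension $r$, the number of orbits, with $\{\Omega_i\}$ an orthogonal basis — which proves all remaining assertions. The argument is essentially routine; the only places requiring a moment's thought are the generators-to-group reduction in the second step and the transitivity argument in the third, and no genuine obstacle is expected.
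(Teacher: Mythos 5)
Your proposal is correct and follows essentially the same route as the paper: positive semidefiniteness from writing $H=\sum_i M_i^{\dagger}M_i$, the observation that zero energy forces $P_i\psi=\psi$ for every generator, decomposition of $V$ along the $G$-orbits, and transitivity forcing equal coefficients so each block carries the single ground state $\sum_{v\in B_i} v$, with orthogonality from the disjointness of the orbits. Your explicit remark that the fixed vectors of the generators are fixed by all of $G$ (via the stabilizing subgroup argument) is a small point the paper leaves implicit, but the argument is otherwise the same.
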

\begin{proof}
H is clearly positive semidefinite since it is a sum of the form $\sum_i A_i^{\dagger} A_i$. Let $B_1, B_2 ... B_k$ be the orbits of G and $V_i$ the spaces spanned by the elements of $B_i$. Then the linear representation of G on V decomposes into representations of G on the subspaces $V_1 ... V_k$ such that W is a direct sum of the spaces $W_i := W \cap V_i$. and $dim W = dim W_1 + dim W_2 + ... + dim W_k$. It is thus enough to show that $dim W_i = 1$ for all i, or after change of notation that dim W = 1 when G acts transitively on V (if G can map any element to any element). Suppose that $\ket{v} \in W$. Then we have $\sum_i [\bra{v} (1-P_i^{\dagger}) ] [(1-P_i) \ket{v} ] = \bra{v} H \ket{v} = 0 $  which implies that $(1 - P_i) \ket{v} = 0$, which finally gives us $P_i \ket{v} = \ket{v}$ for all i. Any $v \in W$ will therefore be fixed by all permutations generated by the $P_i$ matrices, and since they can map any basis state to any other basis state (act transitively) the coordinates of v corresponding to any two basis states must coincide. So dim W = 1 as the equally weighted sum of all basis states is the only ground state of H in V, as desired. The last part of the theorem then simply follows by noticing that the sets $B_i$ of basis vectors do not intersect, therefore their spanned vector spaces $V_i$ are orthogonal to each other.
\end{proof}

We note that in all cases where we apply this theorem, the permutation matrices (corresponding to Fredkin moves) happen to be involutions(and as such have eigenvalues $\pm 1$, so $ (1-P_i^{\dagger})(1-P_i) = 2(1-P_i)$, which we for convenience normalize as $1-P_i$.

\end{document}